\newtheorem{theorem}{Theorem}
\newtheorem{conjecture}{Conjecture}
\newtheorem{lemma}{Lemma}
\definecolor{cyan1}{HTML}{37cdaa}
\definecolor{blue1}{HTML}{5d7ac4}
\definecolor{red1}{HTML}{d0482a}
\definecolor{purple1}{HTML}{845ea8}
\definecolor{orange1}{HTML}{e07229}
\definecolor{yellow1}{HTML}{edcb52}
\definecolor{green}{HTML}{244819}
\definecolor{red}{HTML}{921818}
\definecolor{purple}{HTML}{53047A}
\definecolor{yellow}{HTML}{f4e097}
\definecolor{zoomFuchsia}{HTML}{CB29BA}
\definecolor{zoomYellow1}{HTML}{FFE5B0}
\definecolor{zoomYellow2}{HTML}{FFD47B}
\definecolor{zoomYellow3}{HTML}{D08D00}
\definecolor{zoomPurple1}{HTML}{DBC9FF}
\definecolor{zoomPurple2}{HTML}{C2A4FF}
\definecolor{zoomPurple3}{HTML}{5300F9}
\definecolor{zoomGreen1}{HTML}{C2EBCE}
\definecolor{zoomGreen2}{HTML}{99DDAD}
\definecolor{zoomGreen3}{HTML}{339951}
\definecolor{zoomBlue1}{HTML}{ABD1FF}
\definecolor{zoomBlue2}{HTML}{73B2FF}
\definecolor{zoomBlue3}{HTML}{005AC8}
\definecolor{zoomRed1}{HTML}{FBC2C6}
\definecolor{zoomRed2}{HTML}{F899A0}
\definecolor{zoomRed3}{HTML}{D70F1E}
\definecolor{zoomOrange1}{HTML}{FFCEA7}
\definecolor{zoomOrange2}{HTML}{FFC291}
\definecolor{zoomOrange3}{HTML}{C15500}
\definecolor{gr}{gray}{0.7}
\definecolor{gr1}{gray}{0.7}
\newcommand{\brk}[1]{(#1)}
\DeclarePairedDelimiterX\braket[2]{\langle}{\rangle}{#1 \>\delimsize\vert\> #2}
\definecolor{green1}{HTML}{244819}
\newcommand{\cD}{\mathcal{D}}
\newcommand{\cI}{\mathcal{I}}
\title{D-ideal of generic mass banana integrals in dimensional regularization}
\author[b,c]{Wojciech Flieger,}
\date{}
\affiliation[a]{Dipartimento di Fisica e Astronomia, Universit\`a degli Studi di Padova, Via Marzolo 8, I-35131 Padova, Italy}
\affiliation[b]{INFN, Sezione di Padova, Via Marzolo 8, I-35131 Padova, Italy}
\emailAdd{flieger@pd.infn.it}
\abstract{We present a set of $\ell + 3$ simple differential operators and prove that they annihilate an $\ell$-loop generic mass banana integral in dimensional regularization. We study the singular locus of the ideal generated by these operators and show that it is contained in the set of Landau singularities of the first and second type. Through the Macaulay matrix method, we calculate the corresponding holonomic rank up to $\ell = 8$, obtaining $2^{\ell +1}-1$, which agrees with the number of master integrals for the generic mass banana integrals. Based on these findings, we conjecture that the proposed operators generate the annihilating ideal for the generic mass banana integrals in dimensional regularization.}   
\begin{document}

\maketitle

\section{Introduction}
Feynman integrals are ubiquitous in high-energy physics computations, ranging from particle physics to gravity and cosmology. Their efficient computation is crucial for connecting theory with precise experiments and to validate the theoretical models. Feynman integrals are complicated multi-valued functions belonging to a larger class of so-called twisted period integrals providing deep and intriguing relation to mathematics.

As functions of external parameters, i.e. Lorentz invariants and masses of the particles involved, they satisfy a system of differential equations which was first noticed in studies of the analytic structure of the scattering amplitudes \cite{Regge:1965,Giffon:1969se,Barucchi:1973}. In the modern context of dimensionally regulated Feynman integrals, differential equations have been first studied in~\cite{Kotikov:1990kg, Kotikov:1991pm,Remiddi:1997ny,Gehrmann:1999as,Henn:2013pwa}. To derive the differential system one relies on the so-called integration-by-parts identities (IBP) \cite{Tkachov:1981wb,Chetyrkin:1981qh,Laporta:2000dsw} which provide relations between Feynman integrals with the same set of propagators raised to different powers, the so-called family. It has been proved that all integrals of a given family can be expressed as a linear combination of finitely many so-called Master Integrals (MI) \cite{Smirnov:2010hn}, thus introducing the vector space structure for Feynman integrals. Recently, this fact combined with the insight from the theory of twisted period integrals and cohomology theory has been used to bypass the IBP identities and obtain coefficients of the linear relation, the so-called twisted intersection numbers, directly from the inner product of the vector space structure \cite{Mastrolia:2018uzb,Frellesvig:2019kgj,Frellesvig:2019uqt}.   

The system of differential equations can be recast into one scalar differential equation involving higher-order derivatives, known as the Picard-Fuchs (PF) equation. This can be viewed as a differential operator annihilating a given Feynman integral suggesting an approach focused on differential operators. Indeed such differential operators annihilating a given function can be studied by algebraic methods in the theory of $\mathcal{D}$-modules. Recently, the first applications of these algebraic methods have been applied to Feynman integrals \cite{Chestnov:2022alh,Chestnov:2023kww,Henn:2023tbo}. The crucial step in this approach is the construction of annihilators for Feynman integrals. Griffiths-Dwork pole reduction of rational differential forms \cite{Griffiths_1969,Dwork1962,Dwork1964} provides a framework for such construction. 
However, the straightforward application of the Griffiths-Dwork reduction to Feynman integrals introduces the boundary term, which has to be integrated out, resulting in an inhomogeneous differential equation.
This approach has already been applied to obtain inhomogeneous Picard-Fuchs operators for Feynman integrals \cite{Muller-Stach:2012tgj,Lairez:2022zkj,delaCruz:2024xit}.  The idea to bypass the presence of the boundary term was suggested in \cite{Golubeva:1973,Golubeva:1978} and applied to construct second-order operators for simple examples in finite dimensions. Recently built on this idea an algorithm for constructing annihilating differential operators for general dimensionally regulated Feynman integrals has been proposed in \cite{Chestnov:2025whi} opening a door for systematic application of $\mathcal{D}$-module techniques for Feynman integrals. Let us stress that the information contained in $\mathcal{D}$-modules is equivalent to that of the Pfaffian system, which is the usual way of representing differential equations for Feynman integrals, or Picard-Fuchs operator, and algorithms to switch between these different representations exist \cite{saito2000gröbner,Chestnov:2022alh, Chestnov:2025whi,gorlach2025}. Thus, $\mathcal{D}$-module techniques provide an alternative way for constructing a system of differential equations for Feynman integrals by bypassing the generation of IBP identities.  

Particularly interesting are holonomic $\mathcal{D}$-modules which have finite-dimensional space of solutions and Feynman integrals fall into this category \cite{Kashiwara:1977nf}.
An important example of holonomic systems coming from mathematics is a generalized hypergeometric system introduced by Gelfand-Kapranov-Zelevinsky (GKZ) for generalized Euler integrals \cite{Gelfand1989,Gelfand:1990}. Due to the underlying geometry and symmetries of the GKZ system, it is possible to read off annihilating differential operators directly from the integral representation and the holonomic rank of the system corresponds to the dimension of the cohomology group. Currently, for Feynman integrals treated as a restriction of generalized Euler integrals \cite{Chestnov:2022alh,Chestnov:2023kww} such a simple construction is not known.

Recently a lot of studies have been done for a family of $\ell$-loop two-point Feynman integrals, known as the banana integrals. These are the simplest examples of Feynman integrals for which the underlying geometry is more complicated and hence cannot be expressed just in terms of polylogarithmic functions. Already at two loops one can observe appearance of elliptic curves and integrals \cite{Sabry:1962rge,Broadhurst:1993mw,Laporta:2004rb,Bloch:2013tra,Adams:2015ydq,Adams:2017ejb,Broedel:2017siw,Adams:2018yfj,Honemann:2018mrb,Bogner:2017vim,Weinzierl:2020fyx,Remiddi:2016gno,Berends:1993ee,Caffo:1998du,Muller-Stach:2011qkg,Adams:2013nia,Remiddi:2013joa,Campert:2020yur,Bogner:2019lfa,Bloch:2016izu,Adams:2015gva,Adams:2014vja,Giroux:2022wav,Cacciatori:2023tzp} and for higher loops more complicated functions related to Calabi-Yau manifolds appear \cite{Broedel:2021zij,Pogel:2022yat,Pogel:2022ken,Duhr:2024bzt,Duhr:2025tdf,Duhr:2025kkq,Pogel:2025bca}. This requires us to extend our understanding of the underlying geometries and functions to compute efficiently these integrals. 
Recently interesting results on geometric, differential and analytic structure for equal and generic mass banana integrals in two dimensions \cite{Vanhove:2014wqa,Klemm:2019dbm,Bonisch:2020qmm} and dimensional regularization \cite{Bonisch:2021yfw} have been obtained.

In this work, we provide for the first time a set of $\ell+3$ simple differential operators that annihilate the $\ell$-loop generic mass banana integral in dimensional regularization and study its properties.  

The paper is organized as follows. In Section \ref{sec:math}, we provide the necessary mathematical background for our study including Griffiths-Dwork pole reduction, Macaulay matrix method and basics of the algebraic $\mathcal{D}$-module theory. In Section \ref{sec:banana}, we define $\ell$-loop banana integrals, introduce the set of $\ell+3$ differential operators and prove that they constitute the annihilating ideal for the integral. Further, we show that the singular locus of this ideal is contained in the set of Landau singularities of the first and second type and via the Macaulay matrix method we compute the holonomic rank up to $\ell = 8$ loops. Finally, we present conclusions and outlook in Section \ref{sec:conclusion}.

\section{Mathematical framework}\label{sec:math}
This section provides the necessary mathematical background for the studies of differential properties of Feynman integrals. None of the material is new and we do not present any proofs of the statements in this section for which we refer to original papers on the pole reduction \cite{Griffiths_1969,Dwork1962,Dwork1964}, Macaulay matrix method \cite{Chestnov:2022alh} and introductory textbooks on the theory of $\mathcal{D}$-modules \cite{coutinho1995primer,saito2000gröbner,hibi2014gröbner}.  

\subsection{Griffiths-Dwork pole reduction}
We will make use of the Griffiths-Dwork pole reduction for rational functions in the projective space \cite{Griffiths_1969,Dwork1962,Dwork1964}. In particular, we will be interested in the twisted version of the Griffiths-Dwork reduction which accounts for dimensionally regulated Feynman integrals \cite{delaCruz:2024xit}.
The starting point of the reduction is the following differential form
\begin{align}
\gamma = \frac{1}{\mathcal{F}^{k-1}} \sum_{i<j} (-1)^{i+j} \left( \alpha_{i}\lambda_{j} - \alpha_{j}\lambda_{i} \right) \frac{{\rm U}^{\kappa}}{{\rm F}^{\eta}}  d\alpha_{1} \wedge \hdots \widehat{d\alpha_{i}} \wedge \hdots \wedge \widehat{d\alpha_{j}} \hdots \wedge d\alpha_{n} \,,   
\end{align}
where $\rm U$ and $\rm F$ are homogeneous polynomials.
By taking the total derivative w.r.t. integration variables $\alpha_{i}$ we get
\begin{align}\label{eq:twisted_griffiths}
d \gamma = (\eta + k-1)\frac{\sum_{i=1}^{n} \lambda_{i}\frac{\partial {\rm F}}{\partial \alpha_{i}} }{{\rm F}^{k}}\Omega -  \frac{\sum_{i=1}^{n} \left( \frac{\partial \lambda_{i}}{\partial \alpha_{i}} + \kappa \lambda_{i} \frac{\partial \log({\rm U})}{\partial \alpha_{i}} \right) }{{\rm F}^{k-1}}\Omega \,, 
\end{align}
where $\Omega = \frac{{\rm U}^{\kappa}}{{\rm F}^{\eta}}  \sum _{i=1}^{n} (-1)^{i-1}\alpha_{i} \, d\alpha_{1} \wedge \cdots \wedge \widehat{d\alpha_{i}} \wedge \cdots \wedge d\alpha_{n}$, which is the sought after reduction formula. One can see that one is able to reduce the pole of ${\rm F}$ modulo the exact form $d \gamma$. The necessary condition for this reduction is that the numerator $\mathcal{N}$ of a given rational function belongs to the Jacobian ideal of $\rm{F}$, i.e. $\mathcal{N} \in \langle \frac{\partial {\rm F}}{\partial \alpha_{1}}, \hdots ,  \frac{\partial {\rm F}}{\partial \alpha_{n}} \rangle$  .
Recently, an algorithm has been established for constructing annihilating differential operators for Feynman integrals through the Griffiths-Dwork reduction by imposing additional boundary condition $\lambda_{i} \vert_{\alpha_{i}} =0$ which implies the vanishing of $d \gamma$ after integration \cite{Chestnov:2025whi}.

\subsection{D-modules}

Let us consider the $m$-th Weyl algebra 
\begin{align}
\mathcal{D}_{m} = \mathbb{C} [x_{1}, \ldots , x_{m}] \langle \partial_{1}, \ldots , \partial_{m}  \rangle \,,     
\end{align}
i.e. a ring of differential operators with polynomial coefficients, i.e. a noncommutative ring generated by $x_{1}, \hdots , x_{m}$ and $\partial_{1}, \hdots, \partial_{n}$ subjected to the following relations
\begin{align}
[x_{i}, x_{j}] = [\partial_{i}, \partial_{j}] = 0  \text{ and } [\partial_{i}, x_{j}] = \delta_{ij} \,. 
\label{eq:poly_weyl_commutation}
\end{align}
One can also define the $m$-th Weyl algebra with rational coefficients
\begin{align}
\mathcal{R}_{m} = \mathbb{C} (x_{1}, \ldots , x_{m}) \langle \partial_{1}, \ldots , \partial_{m}  \rangle \,,     
\end{align}    
which is a ring of differential operators with rational coefficients subjected to the commutation relations
\begin{align}
[\partial_{i}, \partial_{j}] = 0  \text{ and } [\partial_{i}, c(x)] = \frac{\partial c(x)}{\partial x_{i}} \text{ for } c \in \mathbb{C} (x_{1}, \ldots , x_{m}) \,. 
\label{eq:rat_weyl_commutation}
\end{align}
Working with $\mathcal{R}_{m}$ might be more suitable for practical calculations in particular for determining the holonomic rank. 

Any element $L$ of $\mathcal{D}_{m}$ can be written in a normally ordered form as
\begin{align}
  L = \sum_{(p,q) \in supp(L)} 
  c_{p,q} \, x^{p} \, \partial^{q}\,,     
  \label{eq:normally_ordered_form}
\end{align}
where we introduced multi-index notation $x^{p} = x_{1}^{p_{1}}\cdots x_{m}^{p_{m}}$ and similarly for $\partial^{q}$.
A non-empty subset $\mathcal{I}$ of $\mathcal{D}$ is called a left ideal of $\mathcal{D}$ if the following two conditions hold: \\
1) If $f,g \in \mathcal{I}$ then $f+g \in \mathcal{I}$,\\
2) If $f \in \mathcal{I}$ and $d \in \mathcal{D}$ then $df \in \mathcal{I}$. \\
A left module $M$ over $\mathcal{D}$ ($\mathcal{D}$-module) is an additive group for which an action of $\mathcal{D}$ is defined by the following map
\begin{align}
\mathcal{D} \times M \ni (d,m) \mapsto dm=\tilde{m} \in M    
\end{align}
which satisfies the following conditions \\
1) $1m = m$, $m\in M$, \\ 
2) $d_{1}(d_{2}m) = (d_{1}d_{2})m$, $d_{1},d_{2}\in \mathcal{D}$, $m \in M$, \\
3) $(d_{1}+d_{2})m = d_{1}m + d_{2}m$, $d_{1},d_{2} \in \mathcal{D}$, $m \in M$, \\
4) $d(m_{1}+m_{2})=dm_{1} + dm_{2}$, $d\in \mathcal{D}$, $m_{1},m_{2}\in M$. \\
Let $\mathcal{I}$ be a left ideal of $\mathcal{D}$, then $\mathcal{D}/\mathcal{I}$ can be regarded as a left $\mathcal{D}$-module. 

Let $u,v \in \mathbb{R}^{m}$, be two real $m$-dimensional vectors. If $u_{i}+v_{i} \geq 0$ for any $i$, we call the pair $(u,v)$ the weight vector in $\mathcal{R}$. Let us define the order of the monomials w.r.t. the weight vector as
\begin{align}
{\rm ord}_{(u,v)}(x^{p} \partial^{q}) = u \cdot p + v \cdot q \,,     
\end{align}
where $u \cdot p$ stands for the inner product. For $L \in \mathcal{R}$ we define
\begin{align}
{\rm ord}_{(u,v)}(L) = max_{(p,q)} {\rm ord}_{(u,v)} (x^{p} \partial^{q}) \,.   
\end{align}
If $u_{i} + v_{i} > 0 $ for any $i$ we define the initial term of $L$ to be
\begin{align}
{\rm in}_{(u,v)}(L) = \sum_{(p,q), u\cdot p + v \cdot q = m} c_{p,q} x^{p} \xi^{q} \,,     
\end{align}
where $x_{i}$ and $\xi_{j}$ are now commutative variables and ${\rm in}_{(u,v)}(L)$ is an element of the ring of polynomials $\mathbb{C}[x,\xi]$.
The principal symbol of the operator $L \in \mathcal{D}$ is the sum of its highest differential operators
\begin{align}
{\rm in}_{(0,\mathbf{1})}(L) = \sum_{\substack{(p,q) \in {\rm supp}(L), \\ 
\vert q \vert={\rm max}}} c_{p,q} \, x^{p} \,  \xi^{q}\,.   
\end{align}
where $\mathbf{1}= (1,\hdots,1)$. The characteristic variety of the ideal $\cI$ in $\mathcal{D}$ is the zero set of the initial ideal ${\rm in}_{(0,\mathbf{1})}(\mathcal{I})$,
\begin{align}
{\rm char}(\mathcal{I}) = \lbrace (x,\xi)\in \mathbb{C}^{2m} \vert p(x,\xi) = 0, p \in {\rm in}_{(0,\mathbf{1})}(\mathcal{I})    \rbrace \,.    
\end{align}
The projection of the characteristic variety ${\rm char}(\mathcal{I})\backslash \lbrace \xi_{1}=\ldots=\xi_{m}=0 \rbrace$ to the coordinate $x$-space is called the singular locus ${\rm sing}(\mathcal{I})$ of the ideal $\mathcal{I}$.
Of particular interest are holonomic $\cD$-modules for which the dimension of the characteristic variety is equal to $m$. The holonomic rank $r$ of the ideal $\cI$ is defined as the dimension of the following vector space over the field $\mathbb{C}(x)$
\begin{align}
    r(\cI) = \text{dim}_{\mathbb{C}\brk{x}}\mathbb{C}(x)[\xi]/\left( \mathbb{C}(x)[\xi] \cdot 
    {\rm char}(\mathcal{I}) \right)  = \text{dim}_{\mathbb{C}\brk{x}} \left( \mathcal{R} / \mathcal{R} \mathcal{I} \right)\,.
    \label{eq:holonomic_rank}
\end{align}

\subsection{Macaulay matrix method}
In the Macaulay matrix method, one builds the linear system of equations where the unknowns are monomials in derivatives $\partial^{q}$. The system is constructed by acting with derivatives on the ideal $\mathcal{I} \in \mathcal{R}$ 
\begin{align}
M = (\partial^{I} \mathcal{I}) \,, \quad \vert I \vert = 0,\hdots, k \,,  
\end{align}
expanded in the normally ordered form, where $l$ is a chosen maximal order of the derivative, i.e. $\partial^{I}= \partial_{x_{1}}^{i_{1}}\cdots \partial_{x_{m}}^{i_{m}}$ with $i_{1} + \hdots + i_{m} = \vert I \vert$. The matrix of the coefficients $M$ of this system is called the Macaulay matrix. By solving this system one identifies the set of independent set monomials in derivatives known as the standard monomials (Std). The number of standard monomials corresponds to the holonomic rank $r$ of the ideal $\mathcal{I}$ 
\begin{align}
\vert {\rm Std} \vert = r(\mathcal{I}) \,.    
\end{align}
This method allows us to find the independent set of generators of the $\mathcal{D}$-module which is an analogue of the IBP reduction for differential operators. If the maximal order $k$ is chosen high enough one can construct the Picard-Fuchs operator w.r.t. a chosen variable $x_{i}$.

\section{Banana integrals}\label{sec:banana}

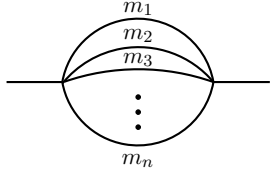
\begin{figure}[t]
\centering
\begin{tikzpicture}
\coordinate (v1) at (0,0);
\coordinate (v22) at (2,0);
\coordinate (pout) at (2.75,0);
\coordinate (pin) at (-0.75,0);
\coordinate (pp1) at (1,-0.2);
\coordinate (pp2) at (1,-0.4);
\coordinate (pp3) at (1,-0.6);
\coordinate (n2) at (1,1);
\draw[thick ] (pin) -- (v1);
\draw[thick ] (pout) -- (v22);
\draw[thick] (2,0) arc (10:170:1.02) node at (1,0.98) [scale=0.8] {$m_{1}$};
\draw[thick] (2,0) arc (40:141:1.30) node at (1,0.62) [scale=0.8] {$m_{2}$};
\draw[thick] (2,0) arc (70.6:109.2:3) node at (1,0.3) [scale=0.8] {$m_{3}$};
\draw[thick] (2,0) arc (-10:-170:1.02) node[midway,below,scale=0.8] {$m_{n}$};
%
%
\draw[fill=black] (pp1) circle (.03cm);
\draw[fill=black] (pp2) circle (.03cm);
\draw[fill=black] (pp3) circle (.03cm);
\end{tikzpicture}
\caption{The graph corresponding to the generic mass $\ell$-loop banana integral.}
\label{fig:banana}
\end{figure}

We consider a family of $\ell$-loop two-point Feynman integrals with generic masses in a parametric representation corresponding to the diagram in Fig.~\ref{fig:banana}
\begin{align}\label{eq:banana_integral}
I(x) = \int_{\Gamma} \frac{{\rm U}^{n-\frac{\ell+1}{2}d}}{{\rm F}^{n - \frac{\ell d}{2}}} \mu\equiv \int_{\Gamma} \Omega \,, 
\end{align}
where $x=(x_{0}, \hdots , x_{n})$, $
\mu \equiv \sum _{i=1}^{n} (-1)^{i-1}\alpha_{i} \, d\alpha_{1} \wedge \cdots \wedge \widehat{d\alpha_{i}} \wedge \cdots \wedge d\alpha_{n}$, $\Gamma= \lbrace (\alpha_{1}, \hdots , \alpha_{n}) \vert
\sum_{i=1}^{n}\alpha_{i} =1 \,, \alpha_{i} \geq 0 \rbrace$ and $n=\ell+1$ is the number of internal lines.
The Symanzik polynomials are given by
\begin{align}
&{\rm U} =  \left( \prod_{k=1}^{n} \alpha_{k} \right) \left( \sum_{j=1}^{n} \frac{1}{\alpha_{i}} \right)\\
&{\rm F} = \prod_{i=1}^{n}\alpha_{i} x_{0} - {\rm U} \sum_{i=1}^{n} \alpha_{i} x_{i} \,,
\end{align}
which are homogeneous polynomials respectively of degree $\ell$ and $\ell + 1$ w.r.t. integration variables. The polynomial ${\rm F}$ is also homogeneous of degree $1$ w.r.t. external parameters $x$ making the integral \eqref{eq:banana_integral} a homogeneous function of degree $\frac{\ell d}{2} -n$.

\subsection{Annihilators of banana integrals}
The algorithm proposed in \cite{Chestnov:2025whi} allows for the construction of differential operators annihilating Feynman integrals and to use them to generate the annihilating $\mathcal{D}$-ideal. However, the operators produced by the algorithm might not be presented in the most compact way, possibly obstructing general patterns and properties. By studying and refining the one and two-loop operators coming from Griffiths-Dwork construction, we were able to extract a set of simple operators valid for all loop generic mass banana integrals.  
\begin{theorem}\label{prop:operators}
The following differential operators annihilate the $\ell$-loop banana integral
\begin{align}
1) \,\,\,\, &\mathcal{A}_{E} = \sum_{i=0}^{n} x_{i}\partial_{i} + \left(n - \frac{n-1}{2}d \right) \,, \\
2) \,\,\,\, &\mathcal{A}_{i} = - x_{0}\partial_{0}^{2} + x_{i} \partial_{i}^{2} -\frac{d}{2} \partial_{0} + \left(2-\frac{d}{2} \right)\partial_{i}    \,, \quad  i=1, \ldots , n \,, \\
3) \,\,\,\, &\mathcal{A}_{s} = \left( \prod_{i=0}^{n} \partial_{i} \right) \left( \sum_{i=0}^{n} \frac{1}{\partial_{i}} \right)  \,.  
\end{align}
\end{theorem}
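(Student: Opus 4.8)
The plan is to act with each operator directly on the integrand $\Omega={\rm U}^{\kappa}{\rm F}^{-\eta}\mu$, where $\kappa=n-\tfrac{\ell+1}{2}d$ and $\eta=n-\tfrac{\ell d}{2}$, and to show that the result is either identically zero or a total derivative $d\gamma$ whose integral over the simplex $\Gamma$ vanishes by Stokes' theorem. Two elementary facts drive everything. First, ${\rm F}$ is linear in the $x_i$, so $\partial{\rm F}/\partial x_0=P$ and $\partial{\rm F}/\partial x_j=-{\rm U}\alpha_j$ ($j\ge 1$), with $P:=\prod_{k=1}^n\alpha_k$, are independent of $x$; hence repeated $x$-differentiation of $\Omega$ only lowers the power of ${\rm F}$ and multiplies by products of these factors, with no Leibniz terms from the numerator. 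Second, ${\rm F}$ is homogeneous of degree $1$ in $x$, so $\sum_{i=0}^n x_i\,\partial{\rm F}/\partial x_i={\rm F}$. I expect $\mathcal{A}_E$ and $\mathcal{A}_s$ to annihilate $\Omega$ pointwise, while only $\mathcal{A}_i$ will genuinely require a Griffiths--Dwork integration by parts.

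For $\mathcal{A}_E$, homogeneity gives at once $\sum_{i=0}^n x_i\partial_i\Omega=-\eta\,\Omega$, so $\mathcal{A}_E\Omega=\big(n-\tfrac{n-1}{2}d-\eta\big)\Omega=0$ since $\eta=n-\tfrac{n-1}{2}d$. For the top-order operator I would first rewrite $\mathcal{A}_s=\sum_{i=0}^n\prod_{j\ne i}\partial_j$. Acting with $\prod_{j\ne i}\partial_j$ on $\Omega$ gives a prefactor \emph{independent of $i$} times $\prod_{j\ne i}(\partial{\rm F}/\partial x_j)$, so $\mathcal{A}_s\Omega$ is that prefactor times $\big(\prod_{j=0}^n\partial{\rm F}/\partial x_j\big)\sum_{i=0}^n(\partial{\rm F}/\partial x_i)^{-1}$. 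The key identity is
\begin{align}
\sum_{i=0}^n\Big(\frac{\partial{\rm F}}{\partial x_i}\Big)^{-1}=\frac{1}{P}-\frac{1}{{\rm U}}\sum_{j=1}^n\frac{1}{\alpha_j}=0,
\end{align}
which is nothing but the defining relation ${\rm U}=P\sum_j 1/\alpha_j$; indeed the factor $\sum_i 1/\partial_i$ in $\mathcal{A}_s$ is engineered precisely to mirror it. Hence $\mathcal{A}_s\Omega=0$ pointwise.

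The substantive case is $\mathcal{A}_i$. Differentiating yields
\begin{align}
\mathcal{A}_i\Omega &= \eta(\eta+1)\frac{{\rm U}^{\kappa}}{{\rm F}^{\eta+2}}\big(x_i{\rm U}^2\alpha_i^2-x_0P^2\big)\mu\\
&\quad +\eta\frac{{\rm U}^{\kappa}}{{\rm F}^{\eta+1}}\Big(\tfrac{d}{2}P+\big(2-\tfrac{d}{2}\big){\rm U}\alpha_i\Big)\mu,
\end{align}
which is not pointwise zero, so I would realise it as $d\gamma_i$ through the twisted Griffiths--Dwork formula \eqref{eq:twisted_griffiths}. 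This calls for a vector field $\lambda^{(i)}=(\lambda^{(i)}_1,\dots,\lambda^{(i)}_n)$ reproducing the leading, pole-$(\eta+2)$ numerator, i.e. $\sum_j\lambda^{(i)}_j\,\partial{\rm F}/\partial\alpha_j=x_i{\rm U}^2\alpha_i^2-x_0P^2$. Matching the coefficients of $x_0$ and of each $x_l$ turns this into a linear system for the $\lambda^{(i)}_j$, solved with $\sum_j\alpha_j\partial_{\alpha_j}{\rm U}=(n-1){\rm U}$ and the identity $\alpha_i^2\,\partial_{\alpha_i}{\rm U}=\alpha_i{\rm U}-P$, yielding
\begin{align}
\lambda^{(i)}_j=\frac{\alpha_j}{n}\big(\alpha_i{\rm U}-P\big)-{\rm U}\alpha_i^2\,\delta_{ij},
\end{align}
each term divisible by $\alpha_j$, so that the boundary condition $\lambda^{(i)}_j\big|_{\alpha_j=0}=0$ holds and the Stokes boundary term drops out.

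The remaining and, I expect, decisive step is to check that the subleading, pole-$(\eta+1)$ contribution generated by \eqref{eq:twisted_griffiths} --- the combination $-\sum_j\big(\partial_{\alpha_j}\lambda^{(i)}_j+\kappa\,\lambda^{(i)}_j\,\partial_{\alpha_j}\log{\rm U}\big)$ --- equals the first-order numerator $\tfrac{d}{2}P+(2-\tfrac{d}{2}){\rm U}\alpha_i$. Evaluating the divergence and the twist term on the explicit $\lambda^{(i)}$ (again via homogeneity and $\alpha_i^2\partial_{\alpha_i}{\rm U}=\alpha_i{\rm U}-P$) will produce $P\big(1-\tfrac{\kappa}{n}\big)+{\rm U}\alpha_i\big(1+\tfrac{\kappa}{n}\big)$; since $\kappa/n=1-\tfrac{d}{2}$ these coefficients are exactly $\tfrac{d}{2}$ and $2-\tfrac{d}{2}$, which both confirms the match and explains the precise $d$-dependent coefficients in $\mathcal{A}_i$. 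This establishes $\mathcal{A}_i\Omega=d\gamma_i$ up to the overall normalisation dictated by \eqref{eq:twisted_griffiths}, and integration over $\Gamma$ finishes the proof. The only real difficulty is keeping these $\alpha$-space manipulations compact; the homogeneity relations together with the single identity $\alpha_i^2\partial_{\alpha_i}{\rm U}=\alpha_i{\rm U}-P$ are what make every term collapse.
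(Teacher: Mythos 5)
Your proposal is correct and follows essentially the same route as the paper: $\mathcal{A}_E$ via degree-1 homogeneity of ${\rm F}$ in $x$, $\mathcal{A}_s$ via pointwise vanishing of the integrand using ${\rm U}=P\sum_j 1/\alpha_j$, and $\mathcal{A}_i$ via the twisted Griffiths--Dwork reduction with vector fields $\lambda^{(i)}_j$ that coincide with the paper's (up to the overall $\eta(\eta+1)$ normalization), including the divisibility $\lambda^{(i)}_j\propto\alpha_j$ that kills the boundary term. The final divergence-plus-twist computation you outline does evaluate to $\tfrac{d}{2}P+\bigl(2-\tfrac{d}{2}\bigr){\rm U}\alpha_i$ exactly as claimed, so the argument is complete.
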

\begin{proof}
\hfill \break
1) The Euler operator $\mathcal{A}_{E}$ annihilates integrand
\begin{align}
\begin{split}
\mathcal{A}_{E} \Omega =& \frac{-\left(n - \frac{n-1}{2}d\right)\left( x_{0}\prod_{i=1}^{n}\alpha_{i} - {\rm U}\sum_{i=1}^{n} \alpha_{i} x_{i} \right)}{{\rm F}}\Omega + \left(n - \frac{n-1}{2}d \right) \Omega  \\
=&\frac{-\left(n - \frac{n-1}{2}d \right){\rm F}}{{\rm F}}\Omega + \left(n - \frac{n-1}{2}d \right) \Omega = 0 \,. 
\end{split}
\end{align}
2) We will show the annihilation property via the Griffiths-Dwork reduction. Without loss of generality let us choose $i=n$ and let us act with the second-order part of the operator on the integral
\begin{align}
\left( - x_{0}\partial_{0}^{2} + x_{n} \partial_{n}^{2} \right) I(x) = \int_{\Gamma} \frac{\mathcal{N}_{2}}{{\rm F}^{2}} \Omega \,, 
\end{align}
where
\begin{align}
\mathcal{N}_{2} = \left(  n - \frac{n-1}{2}d \right) \left(  n - \frac{n-1}{2}d +1 \right)\left( -x_{0} \prod_{k=1}^{n}\alpha_{k}^{2} + \alpha_{n}^{2} x_{n} U^{2}  \right) \,.   
\end{align}
Let us show that $\mathcal{N}_{2} \in \langle \frac{\partial {\rm F}}{\partial \alpha_{1}} \,, \hdots \,, \frac{\partial {\rm F}}{\partial \alpha_{n}}  \rangle$ with the following coefficients
\begin{align}
&\lambda_{j} = \frac{1}{n}\left(  n - \frac{n-1}{2}d \right) \left(  n - \frac{n-1}{2}d +1 \right) \alpha_{n} \alpha_{j} \left( U - \frac{1}{\alpha_{n}}\prod_{k=1}^{n} \alpha_{k} \right) \quad j = 1, \hdots , n-1 \,, \\
&\lambda_{n} = -\frac{1}{n}\left(  n - \frac{n-1}{2}d \right) \left(  n - \frac{n-1}{2}d +1 \right) \alpha_{n}^{2}  \left( (n-1) U + \frac{1}{\alpha_{n}}\prod_{k=1}^{n} \alpha_{k} \right) \,.
\end{align}
Then,
\begin{align}
\begin{split}
\sum_{i=1}^{n} \lambda_{i} \frac{\partial {\rm F}}{\partial \alpha_{i}} =& \frac{1}{n}\left(  n - \frac{n-1}{2}d \right) \left(  n - \frac{n-1}{2}d +1 \right)\times \\  
&\bigg[
\alpha_{n} U \sum_{i=1}^{n}\alpha_{i} \frac{\partial {\rm F}}{\partial \alpha_{i}} - \prod_{k=1}^{n} \alpha_{k} \sum_{i=1}^{n}\alpha_{i} \frac{\partial {\rm F}}{\partial \alpha_{i}} - n \alpha_{n}^{2} U \frac{\partial {\rm F}}{\partial \alpha_{n}}
\bigg]  \\
=&\left(  n - \frac{n-1}{2}d \right) \left(  n - \frac{n-1}{2}d +1 \right) \left[  
-\prod_{k=1}^{n}\alpha_{k}^{2} x_{0} +\alpha_{n}^{2} x_{n} U^{2}
\right] \,,
\end{split}
\end{align}
where we used $\sum_{i=1}^{n}\alpha_{i} \frac{\partial {\rm F}}{\partial \alpha_{i}} = n {\rm F}$ and $\alpha_{n} \frac{\partial {\rm F}}{\partial \alpha_{n}} = {\rm F} +\frac{1}{\alpha_{n}}\prod_{k=1}^{n}\alpha_{k}\sum_{l=1}^{n}\alpha_{l}x_{l} -\alpha_{n}x_{n} {\rm U}$. Thus
\begin{align}
\mathcal{N}_{2} =  \sum_{i=1}^{n} \lambda_{i} \frac{\partial {\rm F}}{\partial \alpha_{i}} \,.   
\end{align}
Since each $\lambda_{i}$, $i=1, \hdots , n$ is proportional to $\alpha_{i}$ we have for the $\gamma$ in the Griffiths-Dwork reduction
\begin{align}
\int_{\Gamma} d \gamma = \int_{\partial \Gamma} \gamma = 0 \,.    
\end{align}
Thus, we have the following relation
\begin{align}
\frac{\mathcal{N}_{2}}{{\rm F}^{2}} \Omega = \frac{1}{\left( n -\frac{n-1}{2}d +1 \right)} \frac{\sum_{i=1}^{n}\frac{\partial \lambda_{i}}{\partial \alpha_{i}} + \sum_{i=1}^{n} \lambda_{i} \frac{\partial}{\partial \alpha_{i}} \log\left( {\rm U}^{n - \frac{n}{2}d}\right)}{{\rm F}}\Omega \,.   
\end{align}
Let us calculate the numerator of the r.h.s.
\begin{align}
\begin{split}
&\sum_{i=1}^{n} \left( \frac{\partial \lambda_{i}}{\partial \alpha_{i}} + \lambda_{i} \frac{\partial}{\partial \alpha_{i}} \log\left( {\rm U}^{n - \frac{n}{2}d}\right) \right) = \frac{1}{n}\left( n -\frac{n-1}{2}d \right) \left( n -\frac{n-1}{2}d +1 \right) \times \\
&\bigg[  
-2n\prod_{k=1}^{n}\alpha_{k} -n\alpha_{n}^{2}\frac{\partial {\rm U}}{\partial \alpha_{n}} + \left( n -\frac{n}{2}d \right)\alpha_{n}\sum_{i=1}^{n} \alpha_{i} \frac{\partial {\rm U}}{\partial \alpha_{i}} \\
&- \left( n -\frac{n}{2}d \right)\frac{1}{{\rm U}}\prod_{k=1}^{n} \alpha_{k} \sum_{i=1}^{n} \alpha_{i} \frac{\partial {\rm U}}{\partial \alpha_{i}} - \left( n -\frac{n}{2}d \right) n \alpha_{n}^{2} \frac{\partial {\rm U}}{\partial \alpha_{n}}
\bigg]  \\=
&-\left( n -\frac{n-1}{2}d \right) \left( n -\frac{n-1}{2}d +1 \right) \bigg[
\frac{d}{2}\prod_{k=1}^{n} \alpha_{k} + \left( 2 - \frac{d}{2} \right) \alpha_{n} {\rm U}
\bigg] \,,
\end{split}
\end{align}
where we used $\sum_{i=1}^{n}\alpha_{i} \frac{\partial {\rm U}}{\partial \alpha_{i}} = (n-1) {\rm U}$ and $\alpha_{n}\frac{\partial {\rm U}}{\partial \alpha_{n}} = {\rm U} - \frac{1}{\alpha_{n}}\prod_{k=1}^{n}\alpha_{k}$. 
Let us apply the first order part of the operator to the integral
\begin{align}
\left( \frac{d}{2} \partial_{0} - \left( 2 - \frac{d}{2} \right) \partial_{n} \right) I(x) = \int_{\Gamma} \frac{\mathcal{N}_{1}}{{\rm F}} \Omega \,, 
\end{align}
where
\begin{align}
\mathcal{N}_{1} = - \left( n - \frac{n-1}{2}d \right) \left[
\frac{d}{2} \prod_{k=1}^{n} \alpha_{k} + \left( 2-\frac{d}{2} \right)\alpha_{n} {\rm U}
\right] \,.
\end{align}
Thus
\begin{align}
\frac{1}{\left( n -\frac{n-1}{2}d +1 \right)}\sum_{i=1}^{n} \left( \frac{\partial \lambda_{i}}{\partial \alpha_{i}} + \lambda_{i} \frac{\partial}{\partial \alpha_{i}} \log\left( {\rm U}^{n - \frac{n}{2}d}\right) \right) = \mathcal{N}_{1} \,,   
\end{align}
and
\begin{align}
\int_{\Gamma} \left( \frac{\mathcal{N}_{2}}{{\rm F}^{2}} - \frac{\mathcal{N}_{1}}{{\rm F}} \right) \Omega  = 0 \,.   
\end{align}

\noindent 3) The $\mathcal{A}_{s}$ operator annihilates the integrand
\begin{align}
\begin{split}
\mathcal{A}_{s} \Omega =& (-1)^{n} \prod_{i=0}^{n-1}(\eta + i) \left( (-1)^{n-1}\prod_{i=1}^{n} \alpha_{i} {\rm U}^{n-1} \prod_{i=1}^{n}\alpha_{i}\sum_{i=1}^{n}\frac{1}{\alpha_{i}} + (-1)^{n} {\rm U}^{n} \prod_{i=1}^{n} \alpha_{i} \right) \frac{\Omega}{{\rm F}^{n}}  \\
=&(-1)^{2n-2}\prod_{i=0}^{n-1}(\eta +i) \left( \prod_{i=1}^{n}\alpha_{i}{\rm U}^{n} - \prod_{i=1}^{n}\alpha_{i}{\rm U}^{n}  \right) \frac{\Omega}{{\rm F}^{n}} = 0 \,.
\end{split}
\end{align}
\end{proof}

Moreover, by direct computation we get,
\begin{lemma}
Operators $\mathcal{A}_{E}$, $\mathcal{A}_{i}$ and $\mathcal{A}_{s}$ satisfy the following commutation relations
\begin{align}
\begin{split}
&[\mathcal{A}_{E},\mathcal{A}_{i}] = -\mathcal{A}_{i} \,, \quad \quad [\mathcal{A}_{E},\mathcal{A}_{s}] = -n\mathcal{A}_{s} \,, \\
&[\mathcal{A}_{i},\mathcal{A}_{j}] = 0 \,, \quad \quad \quad \,\,\,\, [\mathcal{A}_{i},\mathcal{A}_{s}] = (\partial_{0} - \partial_{i})\mathcal{A}_{s} \,.
\end{split}
\end{align}
\end{lemma}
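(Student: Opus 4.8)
The plan is to carry out all four relations inside the Weyl algebra, leaning on two structural observations: that $\mathcal{A}_E$ is the Euler grading operator up to an additive constant, and that $\mathcal{A}_s$ is a polynomial in the $\partial_i$ alone. Since the scalar $(n-\tfrac{n-1}{2}d)$ is central, I would first replace $\mathcal{A}_E$ by $E=\sum_{i=0}^n x_i\partial_i$ in every commutator. The relations \eqref{eq:poly_weyl_commutation} give $[E,x_j]=x_j$ and $[E,\partial_j]=-\partial_j$, so $E$ acts as a derivation assigning weight $+1$ to each $x_i$ and $-1$ to each $\partial_i$; hence $[E,L]=(\deg L)\,L$ for any operator $L$ homogeneous in this grading.

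With that in hand, the first two relations reduce to a degree count. Every term of $\mathcal{A}_i$, namely $-x_0\partial_0^2$, $x_i\partial_i^2$, $\partial_0$ and $\partial_i$, has weight $-1$, so $[\mathcal{A}_E,\mathcal{A}_i]=-\mathcal{A}_i$. Writing $\mathcal{A}_s=\sum_{k=0}^n\prod_{l\neq k}\partial_l$ exhibits it as homogeneous of weight $-n$, giving $[\mathcal{A}_E,\mathcal{A}_s]=-n\,\mathcal{A}_s$. For $[\mathcal{A}_i,\mathcal{A}_j]=0$ I would decompose $\mathcal{A}_i=B_0+C_i$ with $B_0=-x_0\partial_0^2-\tfrac{d}{2}\partial_0$ depending only on $(x_0,\partial_0)$ and $C_i=x_i\partial_i^2+(2-\tfrac{d}{2})\partial_i$ depending only on $(x_i,\partial_i)$. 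For distinct $0,i,j$ these coordinate blocks commute pairwise, so every cross-commutator in $[B_0+C_i,\,B_0+C_j]$ vanishes and the claim follows by disjointness of supports.

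The only relation requiring genuine computation, and the one I would flag as the main obstacle, is $[\mathcal{A}_i,\mathcal{A}_s]=(\partial_0-\partial_i)\mathcal{A}_s$. The governing identity is the Weyl-algebra relation $[x_a,f(\partial)]=-\partial f/\partial\partial_a$, valid for any polynomial $f$ in the commuting derivatives. Because the first-order pieces $-\tfrac{d}{2}\partial_0$ and $(2-\tfrac{d}{2})\partial_i$ commute with the pure-derivative operator $\mathcal{A}_s$, only the second-order parts contribute; using $[\partial_a^2,\mathcal{A}_s]=0$ I obtain $[x_a\partial_a^2,\mathcal{A}_s]=-(\partial\mathcal{A}_s/\partial\partial_a)\,\partial_a^2$. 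The crucial simplification is $(\partial\mathcal{A}_s/\partial\partial_a)\,\partial_a^2=\partial_a\mathcal{A}_s-P$ with $P=\prod_{l=0}^n\partial_l$: differentiating in $\partial_a$ strips one factor of $\partial_a$ from every term of $\mathcal{A}_s$ containing it, and remultiplying by $\partial_a^2$ restores a derivative and reassembles $\partial_a\mathcal{A}_s$, the sole exception being the term of $\mathcal{A}_s$ already missing $\partial_a$, which produces the $-P$.

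Assembling the two contributions from the second-order part $-x_0\partial_0^2+x_i\partial_i^2$ of $\mathcal{A}_i$ then gives
$[\mathcal{A}_i,\mathcal{A}_s]=(\partial_0\mathcal{A}_s-P)-(\partial_i\mathcal{A}_s-P)$, where the full products $P$ cancel and leave $(\partial_0-\partial_i)\mathcal{A}_s$. I expect the index bookkeeping in this last step to be the only delicate point; every other relation is settled by a weight count or a disjoint-support argument, which is why the lemma can be asserted "by direct computation."
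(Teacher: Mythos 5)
Your proof is correct: all four relations check out, including the key identity $\frac{\partial\mathcal{A}_s}{\partial\partial_a}\,\partial_a^2=\partial_a\mathcal{A}_s-\prod_{l=0}^n\partial_l$ behind the last commutator, and your argument is precisely the kind of direct Weyl-algebra computation the paper invokes, since it states the lemma ``by direct computation'' without supplying any details. Your organization of that computation --- the Euler grading $[E,L]=(\deg L)L$ for the first two relations, disjointness of variable supports for the third, and $[x_a,f(\partial)]=-\partial f/\partial\partial_a$ for the fourth --- is a clean and complete way to carry it out.
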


\subsection{Singular locus}
\begin{theorem}
The singular locus of the ideal $\mathcal{I}_{B}$ generated by the operators $\mathcal{A}_{E}$, $\mathcal{A}_{s}$ and $\mathcal{A}_{i}$, $i=1,\hdots n$, is contained in
\begin{align}\label{eq:sing_loc}
&{\rm sing}(\mathcal{I}_{B}) \subseteq \prod_{i=0}^{n} x_{i} \prod_{\lbrace \delta_{\pm} \rbrace} \left(x_{0}-\left(\sum_{j=1}^{n} \delta_{\pm} \sqrt{x_{j}} \right)^{2} \right) \,, \text{ where } \delta_{\pm} = \pm 1 \,,
\end{align}
and the second product runs over all projectively nonequivalent ways to assign $\delta_{\pm}$ in the sum.
\end{theorem}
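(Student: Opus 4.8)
The plan is to bound the characteristic variety ${\rm char}(\mathcal{I}_{B})$ from above by the common zero locus of the principal symbols of the three families of generators, and then eliminate the cotangent variables $\xi$ to read off the projection onto $x$-space. Writing $\sigma(L) := {\rm in}_{(0,\mathbf{1})}(L)$ for the principal symbol, the initial ideal ${\rm in}_{(0,\mathbf{1})}(\mathcal{I}_{B})$ contains $\sigma(\mathcal{A}_{E})$, $\sigma(\mathcal{A}_{i})$ and $\sigma(\mathcal{A}_{s})$, so $\bigl(\sigma(\mathcal{A}_{E}),\sigma(\mathcal{A}_{1}),\ldots,\sigma(\mathcal{A}_{n}),\sigma(\mathcal{A}_{s})\bigr) \subseteq {\rm in}_{(0,\mathbf{1})}(\mathcal{I}_{B})$ and hence ${\rm char}(\mathcal{I}_{B})$ is contained in the common zero set of the generator symbols. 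Consequently ${\rm sing}(\mathcal{I}_{B})$, being the projection of ${\rm char}(\mathcal{I}_{B})\backslash\{\xi=0\}$ to $x$-space, is contained in the projection of that zero set, so it suffices to work with the symbols alone.

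First I would record the symbols by reading off the top-order parts: $\sigma(\mathcal{A}_{E}) = \sum_{i=0}^{n} x_{i}\xi_{i}$ (order one), $\sigma(\mathcal{A}_{i}) = -x_{0}\xi_{0}^{2} + x_{i}\xi_{i}^{2}$ for $i=1,\ldots,n$ (order two), and $\sigma(\mathcal{A}_{s}) = \sum_{j=0}^{n}\prod_{i\neq j}\xi_{i} = \bigl(\prod_{i=0}^{n}\xi_{i}\bigr)\bigl(\sum_{j=0}^{n}\xi_{j}^{-1}\bigr)$ (order $n$). The quadratic relations drive the elimination: on the open stratum where all $x_{i}\neq 0$, setting $\sigma(\mathcal{A}_{i})=0$ forces $\xi_{i}^{2} = (x_{0}/x_{i})\,\xi_{0}^{2}$; moreover $\xi_{0}=0$ would then give $\xi_{i}=0$ for all $i\geq 1$ and hence $\xi=0$, which is excluded, so $\xi_{0}\neq 0$ and $\xi_{i} = \delta_{i}\sqrt{x_{0}/x_{i}}\,\xi_{0}$ with $\delta_{i}=\pm 1$ and $\delta_{0}=+1$.

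Next I would substitute this parametrization into the remaining symbols. Using $x_{i}\sqrt{x_{0}/x_{i}} = \sqrt{x_{0}x_{i}}$, the Euler symbol becomes $\sigma(\mathcal{A}_{E}) = \xi_{0}\sqrt{x_{0}}\bigl(\sqrt{x_{0}} + \sum_{i=1}^{n}\delta_{i}\sqrt{x_{i}}\bigr)$, whose vanishing (with $\xi_{0}\neq 0$, $x_{0}\neq 0$) is exactly $x_{0} = \bigl(\sum_{i=1}^{n}\delta_{i}\sqrt{x_{i}}\bigr)^{2}$. I would then check that $\sigma(\mathcal{A}_{s})=0$ yields the same condition: the factor $\prod_{i}\xi_{i}$ is nonzero on this stratum, while $\sum_{j}\xi_{j}^{-1}$ is proportional to $\sqrt{x_{0}} + \sum_{i=1}^{n}\delta_{i}\sqrt{x_{i}}$ (here $\delta_{i}^{-1}=\delta_{i}$), so it vanishes precisely on the same threshold. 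Collecting over the sign patterns and noting that $\delta_{i}\to-\delta_{i}$ leaves $\bigl(\sum_{i}\delta_{i}\sqrt{x_{i}}\bigr)^{2}$ invariant, the projection over this stratum lands in the union of the $2^{n-1}$ threshold hypersurfaces $x_{0} = \bigl(\sum_{j}\delta_{\pm}\sqrt{x_{j}}\bigr)^{2}$, whose defining product is a genuine polynomial in $x$, being a symmetric function of the $\sqrt{x_{i}}$ up to an overall sign.

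Finally I would dispose of the boundary strata where some $x_{i}=0$: these project into the coordinate hyperplanes collected in the factor $\prod_{i=0}^{n}x_{i}$, hence already lie inside the claimed variety, and since the right-hand side is a closed set the containment passes to the Zariski closure of the projection. The main obstacle I expect is precisely this stratified elimination: one must treat separately the loci where individual $\xi_{i}$ or $x_{i}$ vanish and confirm that no extra irreducible components appear outside $\prod_{i=0}^{n} x_{i} \prod_{\lbrace \delta_{\pm}\rbrace}\bigl(x_{0}-(\sum_{j=1}^{n}\delta_{\pm}\sqrt{x_{j}})^{2}\bigr)$, and verify that the branch ambiguity in $\sqrt{x_{i}}$ reassembles, after the product over all sign patterns, into the stated polynomial. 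The identification of these two factors with the Landau singularities of the first type (the thresholds from ${\rm F}$) and second type (the $\prod_{i}x_{i}$ locus from ${\rm U}$) then follows by comparison with the corresponding Landau equations.
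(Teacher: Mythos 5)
Your proof is correct and follows essentially the same route as the paper's: bound the characteristic variety by the common zero locus of the generators' principal symbols, use the quadratic symbols $\sigma(\mathcal{A}_{i})$ to solve for the $\xi_{i}$ in terms of $\xi_{0}$, and eliminate to obtain the threshold hypersurfaces on the open stratum, with the coordinate hyperplanes $\prod_{i} x_{i}=0$ absorbing the degenerate strata. The only cosmetic difference is that you stratify by vanishing of the $x_{i}$ and close the elimination with the Euler symbol (verifying $\sigma(\mathcal{A}_{s})$ gives the same condition), whereas the paper stratifies by vanishing of the $\xi_{j}$ and uses $\sigma(\mathcal{A}_{s})$ directly; after imposing the $\sigma(\mathcal{A}_{i})$ relations these two symbols are proportional, so the arguments coincide.
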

\begin{proof}
To get the singular locus we want to eliminate $\xi$ variables from the characteristic variety. Let us notice that there are $n+2$ equations in $n+1$ variables $\xi$, thus one of them is redundant.
From the symbols of the operators $\mathcal{A}_{i}$ we get
\begin{align}
x_{0} \xi_{0}^{2} = x_{1} \xi_{1}^{2} = \hdots = x_{n} \xi_{n}^{2} \,.   
\end{align}
From this we get for example
\begin{align}
\sqrt{x_{0}} = \pm \sqrt{x_{n}}\frac{\xi_{n}}{\xi_{0}}, \hdots , \sqrt{x_{n-1}} = \pm \sqrt{x_{n}}\frac{\xi_{n}}{\xi_{n-1}} \,.     
\end{align}
By adding these equation we get
\begin{align}
\sqrt{x_{0}} + \sum_{j=1}^{n-1}\sqrt{x_{j}} = \pm \sqrt{x_{n}} \xi_{n} \sum_{k=0}^{n-1} \frac{1}{\xi_{k}} = \mp \sqrt{x_{n}} \,,
\end{align}
where we took the common denominator and used the symbol of the operator $\mathcal{A}_{s}$. Hence,
\begin{align}
x_{0} =  \left( \sum_{i=1}^{n-1} \sqrt{x_{i}} \pm \sqrt{x_{n}}   \right)^{2} \,.    
\end{align}
By repeating this procedure for all $x_{i}$, $i=0,\hdots, n$, we obtain all solutions for the case $\xi_{j} \neq 0$, $j=0,\hdots n$, which gives
\begin{align}
\prod_{\lbrace \delta_{\pm} \rbrace} \left(x_{0}-\left(\sum_{j=1}^{n} \delta_{\pm} \sqrt{x_{j}} \right)^{2} \right) = 0 \,.    
\end{align}
Now, by considering some of $\xi_{j}$ equal to zero we get
\begin{align}
\prod_{i=0}^{n} x_{i} = 0 \,.   
\end{align}
Thus the singular locus of the ideal $\mathcal{I}_{B}$ is contained in
\begin{align}
\text{sing}(\mathcal{I}_{B}) \subseteq  \prod_{i=0}^{n} x_{i} \prod_{\lbrace \delta_{\pm} \rbrace} \left(x_{0}-\left(\sum_{j=1}^{n} \delta_{\pm} \sqrt{x_{j}} \right)^{2} \right) \,.     
\end{align}
\end{proof} 
We have not been able to prove the opposite inclusion, however, the explicit computation of the singular locus of $\mathcal{I}_{B}$ for low $\ell$ suggests that equality holds in the above theorem.
The result agrees with the formula for singularities of banana integrals provided without proof in \cite{Ponzano:1969tk} and also with the leading singularity part coming from Landau discriminant \cite{Mizera:2021icv}. Recently, in \cite{Matsubara-Heo:2025lrq} the formula for the hypergeometric discriminant of banana integrals was proposed, whose projection to the kinematic space describes the above singular locus.

\subsection{Holonomic Rank}
We compute via the Macaulay matrix method the holonomic rank of the $\mathcal{D}$-ideal $\mathcal{I}_{B}$ up to $\ell=8$. The results are collected in the table below:
\begin{table}[H]
    \centering
    \begin{tabular}{|c|c|c|c|c|c|c|c|c|}
    \hline
        $\ell$ & 1 & 2 & 3 & 4 & 5 & 6 & 7 & 8 \\ \hline
        $r(\mathcal{I}_{B})$ & 3 & 7 & 15 & 31 & 63 & 127 & 255 & 511 \\
    \hline    
    \end{tabular}
\end{table}
\noindent This agrees with the known formula $2^{\ell+1}-1$ for number of MI's for banana integrals \cite{Bitoun:2017nre}.

The results of this section motivate us to state the following,
\begin{conjecture}
The holonomic $\mathcal{D}$-ideal of rank $2^{\ell+1}-1$ for the $\ell$-loop banana integral is generated by the following $\ell+3$ operators
\begin{align}
&\mathcal{A}_{E} = \sum_{i=0}^{n}x_{i} \partial_{i} + \left(n -\frac{n-1}{2}d \right) \,, \\
&\mathcal{A}_{i} = - x_{0}\partial_{0}^{2} + x_{i} \partial_{i}^{2} -\frac{d}{2} \partial_{0} + \left(2-\frac{d}{2} \right)\partial_{i}    \,, \quad  i=1, \ldots , n \,, \\
&\mathcal{A}_{s} = \left( \prod_{i=0}^{n} \partial_{i} \right) \left( \sum_{i=0}^{n} \frac{1}{\partial_{i}} \right)  \,.
\end{align}
\end{conjecture}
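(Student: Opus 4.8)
The plan is to prove the conjecture in two movements: first that the holonomic rank of $\mathcal{I}_{B}$ equals $2^{\ell+1}-1$ for every $\ell$, upgrading the Macaulay computation of Section~\ref{sec:banana} from $\ell\le 8$ to all loops, and then that the inclusion $\mathcal{I}_{B}\subseteq\mathrm{Ann}_{\mathcal{R}}(I)$ provided by Theorem~\ref{prop:operators} is an equality over the rational Weyl algebra. The logical skeleton I would use avoids any Gr\"obner-basis computation: I obtain an \emph{upper} bound $r(\mathcal{I}_{B})\le 2^{\ell+1}-1$ purely from the principal symbols of the three generators, an independent \emph{lower} bound $r(\mathcal{I}_{B})\ge 2^{\ell+1}-1$ from the inclusion into the annihilator together with the known master-integral count, and then squeeze. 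The squeeze simultaneously pins the rank and, through an induced surjection of equal-dimensional quotients, yields the asserted generation.

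For the upper bound I work over $\mathbb{C}(x)$ with $\xi_{i}$ the symbol of $\partial_{i}$. The generators contribute
\begin{align}
\mathrm{in}_{(0,\mathbf{1})}(\mathcal{A}_{E})=\sum_{i=0}^{n}x_{i}\xi_{i}\,,\qquad
\mathrm{in}_{(0,\mathbf{1})}(\mathcal{A}_{i})=x_{i}\xi_{i}^{2}-x_{0}\xi_{0}^{2}\,,\qquad
\mathrm{in}_{(0,\mathbf{1})}(\mathcal{A}_{s})=\sum_{j=0}^{n}\prod_{k\ne j}\xi_{k}\,,
\end{align}
and I let $J\subseteq\mathbb{C}(x)[\xi]$ be the ideal they generate; since every such symbol lies in $\mathrm{in}_{(0,\mathbf{1})}(\mathcal{R}\mathcal{I}_{B})$, one has $J\subseteq\mathrm{in}_{(0,\mathbf{1})}(\mathcal{R}\mathcal{I}_{B})$ and hence $r(\mathcal{I}_{B})\le\dim_{\mathbb{C}(x)}\mathbb{C}(x)[\xi]/J$. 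All three symbols are homogeneous in $\xi$, so $J$ is graded. On each ``sign line'' $\xi_{i}=\delta_{i}s/\sqrt{x_{i}}$, $\delta_{i}\in\{\pm1\}$, the quadrics vanish while $\sum_{i}x_{i}\xi_{i}=s\sum_{i}\delta_{i}\sqrt{x_{i}}$ is nonzero for generic $x$; thus $V(\mathrm{in}(\mathcal{A}_{E}),\mathrm{in}(\mathcal{A}_{1}),\dots,\mathrm{in}(\mathcal{A}_{n}))=\{0\}$, so these $n+1$ forms are a regular sequence and define a graded Artinian complete intersection $A$ with Hilbert series $\frac{(1-t)(1-t^{2})^{n}}{(1-t)^{n+1}}=(1+t)^{n}$. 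In particular $\dim_{\mathbb{C}(x)}A=2^{n}=2^{\ell+1}$, its top degree is $n$, and its socle $A_{n}$ is one-dimensional. Because $\mathrm{in}_{(0,\mathbf{1})}(\mathcal{A}_{s})$ has degree exactly $n$, all of its higher products vanish in $A$, so the ideal it generates equals $\mathbb{C}(x)\cdot[\mathrm{in}_{(0,\mathbf{1})}(\mathcal{A}_{s})]\subseteq A_{n}$; consequently $\dim_{\mathbb{C}(x)}\mathbb{C}(x)[\xi]/J=2^{\ell+1}-1$ precisely when $[\mathrm{in}_{(0,\mathbf{1})}(\mathcal{A}_{s})]\ne 0$ in $A_{n}$. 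Proving this single non-vanishing is the analytic core of the upper bound.

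For the lower bound I use that $\mathcal{I}_{B}\subseteq\mathrm{Ann}_{\mathcal{R}}(I)$ forces $r(\mathcal{I}_{B})\ge r(\mathrm{Ann}_{\mathcal{R}}(I))$, and that $r(\mathrm{Ann}_{\mathcal{R}}(I))$ equals the number of master integrals of the generic-mass banana family: the dotted master integrals are produced from $I(x)$ by rational-coefficient differential operators, hence span $\mathcal{R}/\mathrm{Ann}_{\mathcal{R}}(I)$, while the Pfaffian rank bounds this dimension from above, giving $r(\mathrm{Ann}_{\mathcal{R}}(I))=2^{\ell+1}-1$ by \cite{Bitoun:2017nre}. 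Combining the two bounds yields $r(\mathcal{I}_{B})=2^{\ell+1}-1$; since $\mathcal{I}_{B}\subseteq\mathrm{Ann}_{\mathcal{R}}(I)$ induces a surjection $\mathcal{R}/\mathcal{R}\mathcal{I}_{B}\twoheadrightarrow\mathcal{R}/\mathcal{R}\,\mathrm{Ann}_{\mathcal{R}}(I)$ of $\mathbb{C}(x)$-spaces of equal finite dimension, it is an isomorphism, whence $\mathcal{R}\mathcal{I}_{B}=\mathcal{R}\,\mathrm{Ann}_{\mathcal{R}}(I)$ and the three operators generate the full annihilating ideal over $\mathcal{R}$.

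I expect the non-vanishing of $[\mathrm{in}_{(0,\mathbf{1})}(\mathcal{A}_{s})]$ in the one-dimensional socle $A_{n}$, uniformly in $\ell$, to be the main obstacle: it is the one place where the degree-$n$ operator $\mathcal{A}_{s}$ must interact nontrivially with the complete intersection cut out by $\mathcal{A}_{E}$ and the $\mathcal{A}_{i}$, and the Macaulay data certify it only through $\ell=8$. The natural route is to pair $\mathrm{in}_{(0,\mathbf{1})}(\mathcal{A}_{s})$ against the Gorenstein socle generator via the Grothendieck residue of the complete intersection, reducing the claim to the non-vanishing of an explicit symmetric-function residue in the $\sqrt{x_{i}}$; the sign-line description above makes this residue computable and strongly suggests it never vanishes for generic masses. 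A secondary point requiring care is the identification $r(\mathrm{Ann}_{\mathcal{R}}(I))=2^{\ell+1}-1$, i.e. that \emph{all} master integrals are reached as derivatives of the single corner integral $I$; should this step resist a clean argument, the lower bound can instead be supplied by exhibiting $2^{\ell+1}-1$ explicitly independent standard monomials, which the same complete-intersection picture furnishes directly.
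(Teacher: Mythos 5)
This statement is the paper's \emph{Conjecture}: the paper deliberately leaves it unproven (its only support is Theorem~\ref{prop:operators}, the singular-locus containment, and Macaulay-matrix rank computations up to $\ell=8$, and the conclusions explicitly defer the proof to future work), so your proposal is not being measured against an existing argument but attempts something the paper does not do. Judged on its own terms, your squeeze strategy is sound in outline: the bound $r(\mathcal{I}_B)\le\dim_{\mathbb{C}(x)}\mathbb{C}(x)[\xi]/J$ from the generators' symbols is valid, your complete-intersection analysis is correct (the $n+1$ forms $\sum_i x_i\xi_i$ and $x_i\xi_i^2-x_0\xi_0^2$ cut out only the origin over $\overline{\mathbb{C}(x)}$, giving an Artinian Gorenstein quotient $A$ with Hilbert series $(1+t)^n$ and one-dimensional socle $A_n$), and the endgame (a surjection of equal-dimensional $\mathbb{C}(x)$-spaces is an isomorphism) is fine, though note it only yields generation over $\mathcal{R}$, not over the polynomial Weyl algebra. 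But both pillars rest on claims you do not prove. The first, $[\mathrm{in}_{(0,\mathbf{1})}(\mathcal{A}_s)]\ne0$ in $A_n$, you flag as the ``main obstacle'' and support only by expectation; as written this is a gap. It is, however, fillable exactly along the residue route you gesture at: a direct computation gives
\begin{equation}
\det\!\left(\frac{\partial\bigl(\mathrm{in}(\mathcal{A}_E),\mathrm{in}(\mathcal{A}_1),\dots,\mathrm{in}(\mathcal{A}_n)\bigr)}{\partial(\xi_0,\dots,\xi_n)}\right)
=2^{n}\Bigl(\prod_{i=0}^{n}x_i\Bigr)\,\mathrm{in}_{(0,\mathbf{1})}(\mathcal{A}_s)\,,
\end{equation}
and the Jacobian determinant of a homogeneous system of parameters is a nonzero socle element of the associated graded Artinian complete intersection in characteristic zero (its Grothendieck residue equals $\dim_{\mathbb{C}(x)}A\neq0$). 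Since $\prod_i x_i\neq0$ in $\mathbb{C}(x)$, the non-vanishing holds uniformly in $\ell$; you should carry out this computation rather than leave it conjectural.

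The second gap is more serious. Your lower bound needs $r(\mathrm{Ann}_{\mathcal{R}}(I))=2^{\ell+1}-1$, i.e.\ that the $\mathbb{C}(x)$-span of the derivatives $\partial^{\alpha}I$ has dimension equal to the full master count. The containment of this span in the span of the masters gives only $r(\mathrm{Ann}_{\mathcal{R}}(I))\le 2^{\ell+1}-1$; the reverse inequality is precisely the statement that \emph{every} master of the family --- including the sub-sector masters such as products of tadpoles, which are not themselves derivatives of the corner integral $I$ --- lies in the span of $\{\partial^{\alpha}I\}$. That is a cyclicity (equivalently, irreducibility-type) statement about the banana $\mathcal{D}$-module which \cite{Bitoun:2017nre} does not supply: that reference counts masters, and counting bounds the rank of the annihilator from above, not from below. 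Your one-line justification assumes the hard direction. Worse, your proposed fallback is logically backwards: exhibiting $2^{\ell+1}-1$ monomials outside $J$ only reproves the \emph{upper} bound, because standard monomials must avoid the full initial ideal $\mathrm{in}_{(0,\mathbf{1})}(\mathcal{R}\mathcal{I}_B)\supseteq J$, which can a priori be strictly larger than $J$; no inspection of the generators' symbols alone can ever produce a lower bound on the rank. A viable repair would be to invoke the Cauchy--Kovalevskaya--Kashiwara correspondence and exhibit $2^{\ell+1}-1$ solutions of $\mathcal{I}_B$ independent over $\mathbb{C}(x)$ --- e.g.\ the periods of $\Omega$ over a basis of twisted cycles, for which the boundary terms in the proof of Theorem~\ref{prop:operators} still vanish --- but establishing their independence is again nontrivial and is not sketched in your proposal.
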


\section{Conclusions and outlook}\label{sec:conclusion}
Banana Feynman integrals play a central role in the understanding of the mathematical properties of Feynman integrals. We proposed a set of $\ell +3$ simple differential operators and proved that they annihilate the $\ell$-loop banana integral. We also showed that the singular locus of the ideal generated by these operators is contained in the set of Landau singularities of the banana integrals. Moreover, through the Macaulay matrix method, we computed the holonomic rank of this ideal for $\ell = 8$, obtaining the agreement with the well-known number $2^{\ell +1}-1$ of master integrals for the banana integrals. This evidence motivates us to conjecture that these operators generate the annihilating $\mathcal{D}$-ideal for the banana integrals. 
This opens a completely new avenue in studies of Feynman integrals 
from a simple set of differential operators. As such, this $\mathcal{D}$-ideal and related $\mathcal{D}$-module techniques provide an alternative way for deriving a system of differential equations by bypassing the generation of IBP identities.

The next natural step in the studies of $\mathcal{D}$-modules for Feynman integrals will be the proof of the proposed conjecture. Since the operators are simple and reflect a certain pattern one should ask if there is any underlying symmetry dictating the form of these operators similarly to the GKZ system. It would be interesting to see if a similar set of differential operators exists for other types of Feynman integrals for example ladder diagrams or non-planar cases.

\section*{Acknowledgments}
The author thanks Vsevolod Chestnov, Pierpaolo Mastrolia, Saiei-Jaeyeong Matsubara-Heo, Nobuki Takayama and William J. Torres~Bobadilla for valuable discussions, comments on the manuscript and collaboration on a joint project which inspired this work, and Janusz Gluza for comments on the manuscript. 
The author would like to thank organizers and participants of the Domoschool 2025: "Amplitools: mathematical methods for Particle Physics, Gravitation and Cosmology", where related results have been presented.
The author would like to acknowledge the support of the INFN research initiative {\it Amplitudes} and of the Polish National Science Centre (NCN) under grant 2023/50/A/ST2/00224.

\appendix

\bibliographystyle{JHEP}
\bibliography{refs.bib}

\providecommand{\href}[2]{#2}\begingroup\raggedright\begin{thebibliography}{10}

\bibitem{Regge:1965}
B.~J. V.~de Alfaro and T.~Regge, \emph{{Differential properties of Feynman amplitudes}},  in \emph{High Energy Physics and Elementary Particles}, Proceedings of Lectures at the International Centre for Theoretical Physics, Trieste, p.~263, 1965.

\bibitem{Giffon:1969se}
M.~Giffon, \emph{{Differential properties of Feynman amplitudes. 1.}}, \href{http://dx.doi.org/10.1007/BF02819608}{\emph{Nuovo Cim. A} {\bf 61} (1969) 663--684}.

\bibitem{Barucchi:1973}
G.~Barucchi and G.~Ponzano, \emph{Differential equations for one‐loop generalized feynman integrals}, \href{http://dx.doi.org/10.1063/1.1666327}{\emph{Journal of Mathematical Physics} {\bf 14} (03, 1973) 396--401}.

\bibitem{Kotikov:1990kg}
A.~V. Kotikov, \emph{{Differential equations method: New technique for massive Feynman diagrams calculation}}, \href{http://dx.doi.org/10.1016/0370-2693(91)90413-K}{\emph{Phys. Lett. B} {\bf 254} (1991) 158--164}.

\bibitem{Kotikov:1991pm}
A.~V. Kotikov, \emph{{Differential equation method: The Calculation of N point Feynman diagrams}}, \href{http://dx.doi.org/10.1016/0370-2693(91)90536-Y}{\emph{Phys. Lett. B} {\bf 267} (1991) 123--127}.

\bibitem{Remiddi:1997ny}
E.~Remiddi, \emph{{Differential equations for Feynman graph amplitudes}}, \href{http://dx.doi.org/10.1007/BF03185566}{\emph{Nuovo Cim. A} {\bf 110} (1997) 1435--1452}, [\href{http://arxiv.org/abs/hep-th/9711188}{{\tt hep-th/9711188}}].

\bibitem{Gehrmann:1999as}
T.~Gehrmann and E.~Remiddi, \emph{{Differential equations for two loop four point functions}}, \href{http://dx.doi.org/10.1016/S0550-3213(00)00223-6}{\emph{Nucl. Phys. B} {\bf 580} (2000) 485--518}, [\href{http://arxiv.org/abs/hep-ph/9912329}{{\tt hep-ph/9912329}}].

\bibitem{Henn:2013pwa}
J.~M. Henn, \emph{{Multiloop integrals in dimensional regularization made simple}}, \href{http://dx.doi.org/10.1103/PhysRevLett.110.251601}{\emph{Phys. Rev. Lett.} {\bf 110} (2013) 251601}, [\href{http://arxiv.org/abs/1304.1806}{{\tt 1304.1806}}].

\bibitem{Tkachov:1981wb}
F.~V. Tkachov, \emph{{A Theorem on Analytical Calculability of Four Loop Renormalization Group Functions}}, \href{http://dx.doi.org/10.1016/0370-2693(81)90288-4}{\emph{Phys. Lett.} {\bf B100} (1981) 65--68}.

\bibitem{Chetyrkin:1981qh}
K.~G. Chetyrkin and F.~V. Tkachov, \emph{{Integration by Parts: The Algorithm to Calculate beta Functions in 4 Loops}}, \href{http://dx.doi.org/10.1016/0550-3213(81)90199-1}{\emph{Nucl. Phys.} {\bf B192} (1981) 159--204}.

\bibitem{Laporta:2000dsw}
S.~Laporta, \emph{{High precision calculation of multiloop Feynman integrals by difference equations}}, \href{http://dx.doi.org/10.1142/S0217751X00002159}{\emph{Int. J. Mod. Phys. A} {\bf 15} (2000) 5087--5159}, [\href{http://arxiv.org/abs/hep-ph/0102033}{{\tt hep-ph/0102033}}].

\bibitem{Smirnov:2010hn}
A.~V. Smirnov and A.~V. Petukhov, \emph{{The Number of Master Integrals is Finite}}, \href{http://dx.doi.org/10.1007/s11005-010-0450-0}{\emph{Lett. Math. Phys.} {\bf 97} (2011) 37--44}, [\href{http://arxiv.org/abs/1004.4199}{{\tt 1004.4199}}].

\bibitem{Mastrolia:2018uzb}
P.~Mastrolia and S.~Mizera, \emph{{Feynman Integrals and Intersection Theory}}, \href{http://dx.doi.org/10.1007/JHEP02(2019)139}{\emph{JHEP} {\bf 02} (2019) 139}, [\href{http://arxiv.org/abs/1810.03818}{{\tt 1810.03818}}].

\bibitem{Frellesvig:2019kgj}
H.~Frellesvig, F.~Gasparotto, S.~Laporta, M.~K. Mandal, P.~Mastrolia, L.~Mattiazzi et~al., \emph{{Decomposition of Feynman Integrals on the Maximal Cut by Intersection Numbers}}, \href{http://dx.doi.org/10.1007/JHEP05(2019)153}{\emph{JHEP} {\bf 05} (2019) 153}, [\href{http://arxiv.org/abs/1901.11510}{{\tt 1901.11510}}].

\bibitem{Frellesvig:2019uqt}
H.~Frellesvig, F.~Gasparotto, M.~K. Mandal, P.~Mastrolia, L.~Mattiazzi and S.~Mizera, \emph{{Vector Space of Feynman Integrals and Multivariate Intersection Numbers}}, \href{http://dx.doi.org/10.1103/PhysRevLett.123.201602}{\emph{Phys. Rev. Lett.} {\bf 123} (2019) 201602}, [\href{http://arxiv.org/abs/1907.02000}{{\tt 1907.02000}}].

\bibitem{Chestnov:2022alh}
V.~Chestnov, F.~Gasparotto, M.~K. Mandal, P.~Mastrolia, S.~J. Matsubara-Heo, H.~J. Munch et~al., \emph{{Macaulay matrix for Feynman integrals: linear relations and intersection numbers}}, \href{http://dx.doi.org/10.1007/JHEP09(2022)187}{\emph{JHEP} {\bf 09} (2022) 187}, [\href{http://arxiv.org/abs/2204.12983}{{\tt 2204.12983}}].

\bibitem{Chestnov:2023kww}
V.~Chestnov, S.~J. Matsubara-Heo, H.~J. Munch and N.~Takayama, \emph{{Restrictions of Pfaffian systems for Feynman integrals}}, \href{http://dx.doi.org/10.1007/JHEP11(2023)202}{\emph{JHEP} {\bf 11} (2023) 202}, [\href{http://arxiv.org/abs/2305.01585}{{\tt 2305.01585}}].

\bibitem{Henn:2023tbo}
J.~Henn, E.~Pratt, A.-L. Sattelberger and S.~Zoia, \emph{{D-module techniques for solving differential equations in the context of Feynman integrals}}, \href{http://dx.doi.org/10.1007/s11005-024-01835-7}{\emph{Lett. Math. Phys.} {\bf 114} (2024) 87}, [\href{http://arxiv.org/abs/2303.11105}{{\tt 2303.11105}}].

\bibitem{Griffiths_1969}
P.~A. Griffiths, \emph{On the periods of certain rational integrals: I}, {\emph{Annals of Mathematics} {\bf 90} (1969) 460--495}.

\bibitem{Dwork1962}
B.~Dwork, \emph{{On the zeta function of a hypersurface}}, {\emph{Publications Mathématiques de l'IHÉS} {\bf 12} (1962) 5--68}.

\bibitem{Dwork1964}
B.~Dwork, \emph{{On the Zeta Function of a Hypersurface: II}}, {\emph{Annals of Mathematics} {\bf 80} (1964) 227--299}.

\bibitem{Muller-Stach:2012tgj}
S.~M\"uller-Stach, S.~Weinzierl and R.~Zayadeh, \emph{{Picard-Fuchs equations for Feynman integrals}}, \href{http://dx.doi.org/10.1007/s00220-013-1838-3}{\emph{Commun. Math. Phys.} {\bf 326} (2014) 237--249}, [\href{http://arxiv.org/abs/1212.4389}{{\tt 1212.4389}}].

\bibitem{Lairez:2022zkj}
P.~Lairez and P.~Vanhove, \emph{{Algorithms for minimal Picard\textendash{}Fuchs operators of Feynman integrals}}, \href{http://dx.doi.org/10.1007/s11005-023-01661-3}{\emph{Lett. Math. Phys.} {\bf 113} (2023) 37}, [\href{http://arxiv.org/abs/2209.10962}{{\tt 2209.10962}}].

\bibitem{delaCruz:2024xit}
L.~de~la Cruz and P.~Vanhove, \emph{{Algorithm for differential equations for Feynman integrals in general dimensions}}, \href{http://dx.doi.org/10.1007/s11005-024-01832-w}{\emph{Lett. Math. Phys.} {\bf 114} (2024) 89}, [\href{http://arxiv.org/abs/2401.09908}{{\tt 2401.09908}}].

\bibitem{Golubeva:1973}
V.~A. Golubeva, \emph{{Differential equations for the Feynman integral of the self-energy diagram}}, {\emph{Differ. Uravn.} {\bf 9} (1973) 1298--1309}.

\bibitem{Golubeva:1978}
V.~A. Golubeva and V.~Z. Énol'skii, \emph{{The differential equations for the feynman amplitude of a single-loop graph with four vertices}}, \href{http://dx.doi.org/10.1007/BF01104888}{\emph{Mathematical notes of the Academy of Sciences of the USSR} {\bf 23} (1978) 63--66}.

\bibitem{Chestnov:2025whi}
V.~Chestnov, W.~Flieger, P.~Mastrolia, S.-J. Matsubara-Heo, N.~Takayama and W.~J. Torres~Bobadilla, \emph{{Differential Space of Feynman Integrals: Annihilators and $\mathcal{D}$-module}},  \href{http://arxiv.org/abs/2506.10456}{{\tt 2506.10456}}.

\bibitem{saito2000gröbner}
M.~Saito, B.~Sturmfels and N.~Takayama, \emph{Gr{\"o}bner Deformations of Hypergeometric Differential Equations}.
\newblock Algorithms and Computation in Mathematics. Springer, 2000.

\bibitem{gorlach2025}
P.~G{\"o}rlach, J.~Koefler, A.-L. Sattelberger, M.~Sayrafi, H.~Schroeder, N.~Weiss et~al., \emph{{Connection Matrices in Macaulay2}}, {\emph{arXiv preprint arXiv:2504.01362} (2025) }.

\bibitem{Kashiwara:1977nf}
M.~Kashiwara and T.~Kawai, \emph{{Holonomic Systems of Linear Differential Equations and Feynman Integrals}}, \href{http://dx.doi.org/10.2977/prims/1195196602}{\emph{Publ. Res. Inst. Math. Sci. Kyoto} {\bf 12} (1977) 131--140}.

\bibitem{Gelfand1989}
I.~M. Gelfand, A.~V. Zelevinskii and M.~Kapranov, \emph{Hypergeometric functions and toral manifolds}, {\emph{Functional Analysis and Its Applications} {\bf 23} (1989) 94--106}.

\bibitem{Gelfand:1990}
I.~Gelfand, M.~Kapranov and A.~Zelevinsky, \emph{Generalized euler integrals and a-hypergeometric functions}, \href{http://dx.doi.org/https://doi.org/10.1016/0001-8708(90)90048-R}{\emph{Advances in Mathematics} {\bf 84} (1990) 255--271}.

\bibitem{Sabry:1962rge}
A.~Sabry, \emph{{Fourth order spectral functions for the electron propagator}}, \href{http://dx.doi.org/10.1016/0029-5582(62)90535-7}{\emph{Nucl. Phys.} {\bf 33} (1962) 401--430}.

\bibitem{Broadhurst:1993mw}
D.~J. Broadhurst, J.~Fleischer and O.~V. Tarasov, \emph{{Two loop two point functions with masses: Asymptotic expansions and Taylor series, in any dimension}}, \href{http://dx.doi.org/10.1007/BF01474625}{\emph{Z. Phys. C} {\bf 60} (1993) 287--302}, [\href{http://arxiv.org/abs/hep-ph/9304303}{{\tt hep-ph/9304303}}].

\bibitem{Laporta:2004rb}
S.~Laporta and E.~Remiddi, \emph{{Analytic treatment of the two loop equal mass sunrise graph}}, \href{http://dx.doi.org/10.1016/j.nuclphysb.2004.10.044}{\emph{Nucl. Phys. B} {\bf 704} (2005) 349--386}, [\href{http://arxiv.org/abs/hep-ph/0406160}{{\tt hep-ph/0406160}}].

\bibitem{Bloch:2013tra}
S.~Bloch and P.~Vanhove, \emph{{The elliptic dilogarithm for the sunset graph}}, \href{http://dx.doi.org/10.1016/j.jnt.2014.09.032}{\emph{J. Number Theor.} {\bf 148} (2015) 328--364}, [\href{http://arxiv.org/abs/1309.5865}{{\tt 1309.5865}}].

\bibitem{Adams:2015ydq}
L.~Adams, C.~Bogner and S.~Weinzierl, \emph{{The iterated structure of the all-order result for the two-loop sunrise integral}}, \href{http://dx.doi.org/10.1063/1.4944722}{\emph{J. Math. Phys.} {\bf 57} (2016) 032304}, [\href{http://arxiv.org/abs/1512.05630}{{\tt 1512.05630}}].

\bibitem{Adams:2017ejb}
L.~Adams and S.~Weinzierl, \emph{{Feynman integrals and iterated integrals of modular forms}}, \href{http://dx.doi.org/10.4310/CNTP.2018.v12.n2.a1}{\emph{Commun. Num. Theor. Phys.} {\bf 12} (2018) 193--251}, [\href{http://arxiv.org/abs/1704.08895}{{\tt 1704.08895}}].

\bibitem{Broedel:2017siw}
J.~Broedel, C.~Duhr, F.~Dulat and L.~Tancredi, \emph{{Elliptic polylogarithms and iterated integrals on elliptic curves II: an application to the sunrise integral}}, \href{http://dx.doi.org/10.1103/PhysRevD.97.116009}{\emph{Phys. Rev. D} {\bf 97} (2018) 116009}, [\href{http://arxiv.org/abs/1712.07095}{{\tt 1712.07095}}].

\bibitem{Adams:2018yfj}
L.~Adams and S.~Weinzierl, \emph{{The $\varepsilon$-form of the differential equations for Feynman integrals in the elliptic case}}, \href{http://dx.doi.org/10.1016/j.physletb.2018.04.002}{\emph{Phys. Lett. B} {\bf 781} (2018) 270--278}, [\href{http://arxiv.org/abs/1802.05020}{{\tt 1802.05020}}].

\bibitem{Honemann:2018mrb}
I.~H\"onemann, K.~Tempest and S.~Weinzierl, \emph{{Electron self-energy in QED at two loops revisited}}, \href{http://dx.doi.org/10.1103/PhysRevD.98.113008}{\emph{Phys. Rev. D} {\bf 98} (2018) 113008}, [\href{http://arxiv.org/abs/1811.09308}{{\tt 1811.09308}}].

\bibitem{Bogner:2017vim}
C.~Bogner, A.~Schweitzer and S.~Weinzierl, \emph{{Analytic continuation and numerical evaluation of the kite integral and the equal mass sunrise integral}}, \href{http://dx.doi.org/10.1016/j.nuclphysb.2017.07.008}{\emph{Nucl. Phys. B} {\bf 922} (2017) 528--550}, [\href{http://arxiv.org/abs/1705.08952}{{\tt 1705.08952}}].

\bibitem{Weinzierl:2020fyx}
S.~Weinzierl, \emph{{Modular transformations of elliptic Feynman integrals}}, \href{http://dx.doi.org/10.1016/j.nuclphysb.2021.115309}{\emph{Nucl. Phys. B} {\bf 964} (2021) 115309}, [\href{http://arxiv.org/abs/2011.07311}{{\tt 2011.07311}}].

\bibitem{Remiddi:2016gno}
E.~Remiddi and L.~Tancredi, \emph{{Differential equations and dispersion relations for Feynman amplitudes. The two-loop massive sunrise and the kite integral}}, \href{http://dx.doi.org/10.1016/j.nuclphysb.2016.04.013}{\emph{Nucl. Phys. B} {\bf 907} (2016) 400--444}, [\href{http://arxiv.org/abs/1602.01481}{{\tt 1602.01481}}].

\bibitem{Berends:1993ee}
F.~A. Berends, M.~Buza, M.~Bohm and R.~Scharf, \emph{{Closed expressions for specific massive multiloop selfenergy integrals}}, \href{http://dx.doi.org/10.1007/BF01411014}{\emph{Z. Phys. C} {\bf 63} (1994) 227--234}.

\bibitem{Caffo:1998du}
M.~Caffo, H.~Czyz, S.~Laporta and E.~Remiddi, \emph{{The Master differential equations for the two loop sunrise selfmass amplitudes}}, {\emph{Nuovo Cim. A} {\bf 111} (1998) 365--389}, [\href{http://arxiv.org/abs/hep-th/9805118}{{\tt hep-th/9805118}}].

\bibitem{Muller-Stach:2011qkg}
S.~M\"uller-Stach, S.~Weinzierl and R.~Zayadeh, \emph{{A Second-Order Differential Equation for the Two-Loop Sunrise Graph with Arbitrary Masses}}, \href{http://dx.doi.org/10.4310/CNTP.2012.v6.n1.a5}{\emph{Commun. Num. Theor. Phys.} {\bf 6} (2012) 203--222}, [\href{http://arxiv.org/abs/1112.4360}{{\tt 1112.4360}}].

\bibitem{Adams:2013nia}
L.~Adams, C.~Bogner and S.~Weinzierl, \emph{{The two-loop sunrise graph with arbitrary masses}}, \href{http://dx.doi.org/10.1063/1.4804996}{\emph{J. Math. Phys.} {\bf 54} (2013) 052303}, [\href{http://arxiv.org/abs/1302.7004}{{\tt 1302.7004}}].

\bibitem{Remiddi:2013joa}
E.~Remiddi and L.~Tancredi, \emph{{Schouten identities for Feynman graph amplitudes; The Master Integrals for the two-loop massive sunrise graph}}, \href{http://dx.doi.org/10.1016/j.nuclphysb.2014.01.009}{\emph{Nucl. Phys. B} {\bf 880} (2014) 343--377}, [\href{http://arxiv.org/abs/1311.3342}{{\tt 1311.3342}}].

\bibitem{Campert:2020yur}
L.~G.~J. Campert, F.~Moriello and A.~Kotikov, \emph{{Sunrise integrals with two internal masses and pseudo-threshold kinematics in terms of elliptic polylogarithms}}, \href{http://dx.doi.org/10.1007/JHEP09(2021)072}{\emph{JHEP} {\bf 09} (2021) 072}, [\href{http://arxiv.org/abs/2011.01904}{{\tt 2011.01904}}].

\bibitem{Bogner:2019lfa}
C.~Bogner, S.~M\"uller-Stach and S.~Weinzierl, \emph{{The unequal mass sunrise integral expressed through iterated integrals on $\overline{\mathcal M}_{1,3}$}}, \href{http://dx.doi.org/10.1016/j.nuclphysb.2020.114991}{\emph{Nucl. Phys. B} {\bf 954} (2020) 114991}, [\href{http://arxiv.org/abs/1907.01251}{{\tt 1907.01251}}].

\bibitem{Bloch:2016izu}
S.~Bloch, M.~Kerr and P.~Vanhove, \emph{{Local mirror symmetry and the sunset Feynman integral}}, \href{http://dx.doi.org/10.4310/ATMP.2017.v21.n6.a1}{\emph{Adv. Theor. Math. Phys.} {\bf 21} (2017) 1373--1453}, [\href{http://arxiv.org/abs/1601.08181}{{\tt 1601.08181}}].

\bibitem{Adams:2015gva}
L.~Adams, C.~Bogner and S.~Weinzierl, \emph{{The two-loop sunrise integral around four space-time dimensions and generalisations of the Clausen and Glaisher functions towards the elliptic case}}, \href{http://dx.doi.org/10.1063/1.4926985}{\emph{J. Math. Phys.} {\bf 56} (2015) 072303}, [\href{http://arxiv.org/abs/1504.03255}{{\tt 1504.03255}}].

\bibitem{Adams:2014vja}
L.~Adams, C.~Bogner and S.~Weinzierl, \emph{{The two-loop sunrise graph in two space-time dimensions with arbitrary masses in terms of elliptic dilogarithms}}, \href{http://dx.doi.org/10.1063/1.4896563}{\emph{J. Math. Phys.} {\bf 55} (2014) 102301}, [\href{http://arxiv.org/abs/1405.5640}{{\tt 1405.5640}}].

\bibitem{Giroux:2022wav}
M.~Giroux and A.~Pokraka, \emph{{Loop-by-loop differential equations for dual (elliptic) Feynman integrals}}, \href{http://dx.doi.org/10.1007/JHEP03(2023)155}{\emph{JHEP} {\bf 03} (2023) 155}, [\href{http://arxiv.org/abs/2210.09898}{{\tt 2210.09898}}].

\bibitem{Cacciatori:2023tzp}
S.~L. Cacciatori, H.~Epstein and U.~Moschella, \emph{{Banana integrals in configuration space}}, \href{http://dx.doi.org/10.1016/j.nuclphysb.2023.116343}{\emph{Nucl. Phys. B} {\bf 995} (2023) 116343}, [\href{http://arxiv.org/abs/2304.00624}{{\tt 2304.00624}}].

\bibitem{Broedel:2021zij}
J.~Broedel, C.~Duhr and N.~Matthes, \emph{{Meromorphic modular forms and the three-loop equal-mass banana integral}}, \href{http://dx.doi.org/10.1007/JHEP02(2022)184}{\emph{JHEP} {\bf 02} (2022) 184}, [\href{http://arxiv.org/abs/2109.15251}{{\tt 2109.15251}}].

\bibitem{Pogel:2022yat}
S.~P\"ogel, X.~Wang and S.~Weinzierl, \emph{{The three-loop equal-mass banana integral in \ensuremath{\varepsilon}-factorised form with meromorphic modular forms}}, \href{http://dx.doi.org/10.1007/JHEP09(2022)062}{\emph{JHEP} {\bf 09} (2022) 062}, [\href{http://arxiv.org/abs/2207.12893}{{\tt 2207.12893}}].

\bibitem{Pogel:2022ken}
S.~P\"ogel, X.~Wang and S.~Weinzierl, \emph{{Taming Calabi-Yau Feynman Integrals: The Four-Loop Equal-Mass Banana Integral}}, \href{http://dx.doi.org/10.1103/PhysRevLett.130.101601}{\emph{Phys. Rev. Lett.} {\bf 130} (2023) 101601}, [\href{http://arxiv.org/abs/2211.04292}{{\tt 2211.04292}}].

\bibitem{Duhr:2024bzt}
C.~Duhr, F.~Gasparotto, C.~Nega, L.~Tancredi and S.~Weinzierl, \emph{{On the electron self-energy to three loops in QED}}, \href{http://dx.doi.org/10.1007/JHEP11(2024)020}{\emph{JHEP} {\bf 11} (2024) 020}, [\href{http://arxiv.org/abs/2408.05154}{{\tt 2408.05154}}].

\bibitem{Duhr:2025tdf}
C.~Duhr, \emph{{Modular forms for three-loop banana integrals}},  \href{http://arxiv.org/abs/2502.15325}{{\tt 2502.15325}}.

\bibitem{Duhr:2025kkq}
C.~Duhr, S.~Maggio, F.~Porkert, C.~Semper and S.~F. Stawinski, \emph{{Three-loop banana integrals with four unequal masses}},  \href{http://arxiv.org/abs/2507.23061}{{\tt 2507.23061}}.

\bibitem{Pogel:2025bca}
S.~P{\"o}gel, T.~Teschke, X.~Wang and S.~Weinzierl, \emph{{The unequal-mass three-loop banana integral}},  \href{http://arxiv.org/abs/2507.23594}{{\tt 2507.23594}}.

\bibitem{Vanhove:2014wqa}
P.~Vanhove, \emph{{The physics and the mixed Hodge structure of Feynman integrals}}, \href{http://dx.doi.org/10.1090/pspum/088/01455}{\emph{Proc. Symp. Pure Math.} {\bf 88} (2014) 161--194}, [\href{http://arxiv.org/abs/1401.6438}{{\tt 1401.6438}}].

\bibitem{Klemm:2019dbm}
A.~Klemm, C.~Nega and R.~Safari, \emph{{The $l$-loop Banana Amplitude from GKZ Systems and relative Calabi-Yau Periods}}, \href{http://dx.doi.org/10.1007/JHEP04(2020)088}{\emph{JHEP} {\bf 04} (2020) 088}, [\href{http://arxiv.org/abs/1912.06201}{{\tt 1912.06201}}].

\bibitem{Bonisch:2020qmm}
K.~B\"onisch, F.~Fischbach, A.~Klemm, C.~Nega and R.~Safari, \emph{{Analytic structure of all loop banana integrals}}, \href{http://dx.doi.org/10.1007/JHEP05(2021)066}{\emph{JHEP} {\bf 05} (2021) 066}, [\href{http://arxiv.org/abs/2008.10574}{{\tt 2008.10574}}].

\bibitem{Bonisch:2021yfw}
K.~B\"onisch, C.~Duhr, F.~Fischbach, A.~Klemm and C.~Nega, \emph{{Feynman integrals in dimensional regularization and extensions of Calabi-Yau motives}}, \href{http://dx.doi.org/10.1007/JHEP09(2022)156}{\emph{JHEP} {\bf 09} (2022) 156}, [\href{http://arxiv.org/abs/2108.05310}{{\tt 2108.05310}}].

\bibitem{coutinho1995primer}
S.~Coutinho, \emph{A Primer of Algebraic D-Modules}.
\newblock London Mathematical Society Student Texts. Cambridge University Press, 1995.

\bibitem{hibi2014gröbner}
T.~Hibi et~al., \emph{Gr{\"o}bner Bases: Statistics and Software Systems}.
\newblock Springer Japan, 2014.

\bibitem{Ponzano:1969tk}
G.~Ponzano, T.~Regge, E.~R. Speer and M.~J. Westwater, \emph{{The monodromy rings of a class of self-energy graphs}}, \href{http://dx.doi.org/10.1007/BF01645374}{\emph{Commun. Math. Phys.} {\bf 15} (1969) 83--132}.

\bibitem{Mizera:2021icv}
S.~Mizera and S.~Telen, \emph{{Landau discriminants}}, \href{http://dx.doi.org/10.1007/JHEP08(2022)200}{\emph{JHEP} {\bf 08} (2022) 200}, [\href{http://arxiv.org/abs/2109.08036}{{\tt 2109.08036}}].

\bibitem{Matsubara-Heo:2025lrq}
S.-J. Matsubara-Heo, \emph{{Hypergeometric Discriminants}},  \href{http://arxiv.org/abs/2505.13163}{{\tt 2505.13163}}.

\bibitem{Bitoun:2017nre}
T.~Bitoun, C.~Bogner, R.~P. Klausen and E.~Panzer, \emph{{Feynman integral relations from parametric annihilators}}, \href{http://dx.doi.org/10.1007/s11005-018-1114-8}{\emph{Lett. Math. Phys.} {\bf 109} (2019) 497--564}, [\href{http://arxiv.org/abs/1712.09215}{{\tt 1712.09215}}].

\end{thebibliography}\endgroup

\end{document}